\newtheorem{thm}{Theorem}
\newtheorem{cor}[thm]{Corollary}
\newtheorem{prop}[thm]{Proposition}
\newtheorem{obs}[thm]{Observation}
\theoremstyle{remark}
\newtheorem{remark}[thm]{Remark}
\DeclareMathSymbol{\lsb@l}{\mathalpha}{letters}{`l}
\def\HH{\mbox{\ensuremath{\mathcal H}}\xspace}
\author{D\"om\"ot\"or P\'alv\"olgyi\affiliationmark{1}\thanks{Research supported by the Marie Sk\l odowska-Curie action of the EU, under grant IF 660400 and by the Lend\"ulet program of the Hungarian Academy of Sciences (MTA), under grant number LP2017-19/2017.}}
\title{Weak embeddings of posets to the Boolean lattice}
\affiliation{	
	MTA-ELTE Lend\"ulet Combinatorial Geometry Research Group, Institute of Mathematics, E\"otv\"os Lor\'and University (ELTE), Budapest, Hungary}
\keywords{some, well classifying, words}
\keywords{extremal combinatorics, forbidden subposet, NP-completeness}
\begin{document}
\maketitle

\begin{abstract}
The goal of this paper is to prove that several variants of deciding whether a poset can be (weakly) embedded into a small Boolean lattice, or to a few consecutive levels of a Boolean lattice, are {\bf NP}-complete, answering a question of Griggs and of Patk\'os.
As an equivalent reformulation of one of these problems, we also derive that it is {\bf NP}-complete to decide whether a given graph can be embedded into the two middle levels of some hypercube.
\end{abstract}

\section{Introduction}

A {\em poset} $(P,\le)$ is a partially ordered set on $|P|$ elements.
An injective map $f$ from poset $P$ to poset $Q$ is called a {\em weak embedding} if for every $p,q\in P$ we have $f(p)\le f(q)$ if $p\le q$, and it is called a {\em (strong) embedding} if $f(p)\le f(q)$ if and only if $p\le q$.
Similarly, an injective map $f$ from graph $G$ to graph $H$ is called an {\em embedding} if for any edge $uv$ of $G$ its image $f(u)f(v)$ is an edge of $H$, and it is called an {\em induced embedding} if $uv$ is an edge of $G$ if and only if $f(u)f(v)$ is an edge of $H$.
(Be careful that simply {\em embedding} is {\em strong} embedding for posets, but for graphs, it is the equivalent of {\em weak} poset embeddings --- unfortunately, both are standard terminology.)
The {\em Hasse diagram} of a poset $P$ is a graph whose vertex set is $V(P)$ and whose edges are pairs $\{p,q\}$ satisfying that there is no $r$ for which $p<r<q$ holds.
When speaking of elements of $P$, the terms \emph{neighborhood} and \emph{distance} refer to the Hasse diagram of $P$. 
Thus, we call the elements adjacent to an element $x$ in the Hasse diagram the {\em neighbors} of $x$, and the length of the shortest path in the Hasse diagram connecting some elements $x$ and $y$ their {\em distance}.
The {\em Boolean lattice} of $\{1,\ldots,n\}$, $B_n$, has $2^n$ elements, one for each subset of $\{1,\ldots,n\}$, where the ordering is given by containment structure, i.e., $X\le Y$ if $X\subset Y$.
The {\em $k^\mathrm{th}$ level} of a Boolean lattice is the collection of its elements of size $k$.
For $n$ even, we refer to the $(\frac n2)^\mathrm{th}$ level of $B_n$ as its {\em middle level}, while for general $n$, we refer to the levels from $\lfloor\frac {n-e+1}2\rfloor^\mathrm{th}$ to $\lfloor\frac {n+e-1}2\rfloor^\mathrm{th}$ as the $e$ {\em middle levels} of $B_n$.\\

In this paper we study the decision complexity of whether a poset admits a weak embedding to (some levels of) $B_n$ (where $n$ is arbitrary, given as part of the input).
Apparently, earlier only strong embeddings to $B_n$ have been studied, first in \cite{Trotter}, while the {\bf NP}-completeness of the problem was established in \cite{Stahl}; for more recent results related to complexity, see \cite{Habib,Raynaud}.
We find it somewhat surprising that weak embeddings have not yet been studied.
There are, however, some graph problems that are equivalent to weak embedding questions to two consecutive levels, e.g., the Middle Levels conjecture is that there is a Hamiltonian cycle in the union of the two middle levels of every $B_{2n+1}$ --- this has been recently solved by M\"utze~\cite{Mutze}; for a shorter proof, see~\cite{GMN}.\\

We write $P\subset Q$ if $P$ has a weak embedding to $Q$.
This indeed defines a partial order on the posets, i.e., $P\subset Q \subset R$ implies $P\subset R$ and $P\subset Q\subset P$ implies that $P$ and $Q$ are isomorphic.
If $P\subset Q$, we say that $Q$ {\em contains} (a copy of) $P$, otherwise we say that $Q$ is {\em $P$-free}.
We denote by $d(P)$ the {\em smallest} integer such that $P\subset B_{d(P)}$.
(For strong embeddings, this parameter is called the {\em 2-dimension} of $P$, and embeddings to $B_n$ are called {\em bit-vector encodings}.)
As $P\subset C_{|P|}\subset B_{|P|-1}$, where $C_n$ denotes the {\em chain} (totally ordered poset) on $n$ elements, $d(P)$ is always some non-negative integer.
Despite the huge literature of embedding trees to the hypercube \cite{CL11,LS88}, it seems that $d(P)$ has not even been studied for trees.
The problem of determining the value of $d(T_k)$, where $T_k$ denotes the complete binary tree of depth $k$, can be shown to be equivalent to a search problem proposed by G.O.H.\ Katona~\cite{Emlektabla}, which is also open.\\

We also study weak embeddings to the union of a few consecutive levels of the Boolean lattice.
We denote by $e(P)$ the {\em largest} integer such that any $e(P)$ consecutive levels of any Boolean lattice are $P$-free.
It follows from the definitions that $e(P)\le d(P)$, as any $d(P)+1$ levels of any Boolean lattice contain a copy of $B_{d(P)}$ which contains a copy of $P$.
If $P$ has a smallest and a largest element, then $e(P)=d(P)$, while examples for small posets for which inequality holds include the so-called {\em Fork} poset on three elements, $a,b,c$, with $a<b,c$, for which $e(P_\mathrm{fork})=1<d(P_\mathrm{fork})=2$,
and the so-called {\em Butterfly} poset on four elements, $w,x,y,z$, with $w,x<y,z$, for which $e(P_\mathrm{butterfly})=2<d(P_\mathrm{butterfly})=3$.
We also note that $h(P)-1\le e(P)\le d(P)$, where $h(P)$ is the {\em height} of the poset, i.e., the {\em cardinality} of its longest subchain.\\

The parameter $e(P)$ has been introduced in Griggs, Li and Lu~\cite{GriggsLiLu}, as it naturally came up while studying the largest possible size of a $P$-free subposet of $B_n$, denoted by $La(n,P)$.
This parameter has been first studied by Katona in the 1980s for general posets; for a recent survey see Griggs and Li~\cite{GriggsLi}.
The general conjecture, implicitly contained in the earlier works of Katona and others, and explicitly first stated by Bukh~\cite{Bukh}, and a couple of months later, independently, by Griggs and Lu~\cite{GriggsLu}, is that $\pi(P)=\lim_{n\to\infty} \frac{La(n,P)}{\binom{n}{n/2}}$ always exists, and equals to $e(P)$.
(Note that $e(P)\le \pi(P)$ follows from that the union of the $e(P)$ middle levels of $B_n$ are $P$-free.) 
This has only been proved for special posets.
The most general result is due to Bukh~\cite{Bukh}, which says that if the Hasse diagram of $P$ is a tree, then $\pi(P)=h(P)-1=e(P)$.\\

Motivated by this, Griggs~\cite{Griggs} and Patk\'os~\cite{Patkos} asked independently around the same time the complexity of determining $e(P)$.\footnote{Griggs has also asked for the complexity of determining the 2-dimension of $P$, but this has already been proved to be {\bf NP}-complete by Stahl and Wille~\cite{Stahl}; for a more accessible version, see Habib et al.~\cite{Habib}.}
Answering their questions, we show the following.

\begin{thm}\label{thm:d} To decide whether $d(P)$ is at most $n$ and to decide whether $e(P)$ is at most $n$ are both {\bf NP}-complete.
\end{thm}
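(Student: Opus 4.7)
The plan has three parts: membership in \textbf{NP}, a reduction between the two problems, and the main combinatorial reduction.

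For membership in \textbf{NP}, a weak embedding $f:P\to 2^{[n]}$ is itself a polynomial-size certificate for $d(P)\le n$, verified by checking injectivity and the implication $p\le q\Rightarrow f(p)\subseteq f(q)$. For $e(P)\le n$ the certificate is a pair $(N,f)$ with $f$ a weak embedding of $P$ into some $n+1$ consecutive levels of $B_N$; one argues, by reducing to a canonical form that trims or merges coordinates without enlarging the level spread, that $N$ may be assumed polynomially bounded in $|P|$, so the witness has polynomial size. The three conditions---injectivity, order preservation, and the cardinality window $\max_p|f(p)|-\min_p|f(p)|\le n$---are all polynomial to verify.

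For the reduction between the two problems, I would use the identity $e(P)=d(P)$ for posets with both $\hat 0$ and $\hat 1$, noted in the introduction. Once \textbf{NP}-hardness is established by a reduction whose output poset either already has a minimum and maximum or can be harmlessly padded with them, the same instance certifies both $e(P_\phi)\le n$ and $d(P_\phi)\le n$, dispatching both halves of the theorem at once.

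For the core hardness reduction I would reduce from 3-SAT (alternatively, 3-colorability). Given $\phi$ with variables $x_1,\dots,x_m$ and clauses $C_1,\dots,C_k$, construct a polynomial-size poset $P_\phi$ and an integer $n$ such that $P_\phi$ has a weak embedding into $n+1$ consecutive levels iff $\phi$ is satisfiable. The construction combines a \emph{variable gadget} per variable, whose weak embeddings into the allowed level window come in exactly two canonical shapes encoding truth values; a \emph{clause gadget} per clause, admitting an embedding only when at least one of its literal-gadgets is in the \emph{true} shape; and \emph{wiring} elements shared between gadgets to force the chosen shape of each variable to be seen consistently by every clause containing it. The forward direction---from a satisfying assignment to an embedding---is routine gadget-by-gadget.

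The main obstacle is the converse: ruling out weak embeddings that do not come from a satisfying assignment. Weak embeddings are very permissive, since $p\le q\Rightarrow f(p)\subseteq f(q)$ carries no negative content, so the coordinate-separation arguments that drive the strong (2-dimension) case in \cite{Stahl,Habib} do not transfer. The leverage I propose to exploit is the level restriction intrinsic to $e(P)$: it pins every $|f(p)|$ into a window of width $n$, so that each covering relation $p\lessdot q$ between adjacent levels forces $|f(q)\setminus f(p)|=1$, converting a loose inclusion problem into an almost rigid one. A careful budgeting of how many coordinates each gadget can spend on each level should then let one enumerate the possible weak embeddings of a gadget and verify that only the intended ones survive; propagating this local rigidity globally, so that the truth value assigned to a variable is the same in every clause containing it, will be the delicate part of the argument.
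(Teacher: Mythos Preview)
Your high-level plan matches the paper's: unify the two problems via $d(P)=e(P)$ for posets with $\hat 0$ and $\hat 1$; exploit the tightness between $h(P)$ and the allowed level window to force each cover relation to gain exactly one coordinate; then reduce from an \textbf{NP}-hard constraint problem through variable and clause gadgets. So the strategy is right.

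Where your proposal is thin is exactly the point you flag as ``delicate'': you do not supply a mechanism that makes the embedding rigid enough to read off an assignment. The paper's key idea here is concrete and worth knowing. It builds a large \emph{frame} $P_1\subset B_{3n}$ consisting of essentially all subsets of size at most $9$ of a base set $\{a_1,b_1,c_1,\dots,a_n,b_n,c_n\}$, together with long chains from each $9$-set up to the top element $X$; this forces any weak embedding of $P$ into $B_{3n}$ to map every element of $P_1$ to ``itself'' after a permutation of the base. The variable elements $x_i$ are then pinned between $\{a_i\}$ and $\{a_i,b_i,c_i\}$, and since the two pairs $\{a_i,b_i\}$ and $\{a_i,c_i\}$ were deliberately omitted from $P_1$, these are the only two images available---the two ``truth values''. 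The clause element $Z_l$ is squeezed into the sextuple level, and the constraint is enforced by having the two ``bad'' sextuples (all-$b$ or all-$c$) already occupied by $P_1$. Your ``careful budgeting of coordinates'' gestures at this but does not supply it; the frame-plus-strategic-omissions trick is the missing ingredient.

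Two smaller points. First, the paper reduces from \textsc{mon-nae-3-sat} (hypergraph $2$-colourability) rather than $3$-SAT; the symmetric not-all-equal constraint is what makes the ``occupied sextuples'' trick work cleanly, whereas $3$-SAT's asymmetric at-least-one-true clause and its negated literals would need extra gadgetry. Your alternative of $3$-colourability is closer in spirit. Second, your \textbf{NP}-membership argument for $e(P)$ is actually more careful than the paper's one-line ``trivially in \textbf{NP}''; the polynomial bound on the ambient $N$ does require a short argument of the kind you sketch.
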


\begin{remark} In fact, as we will see from the proof, it is already {\bf NP}-complete for posets with a smallest and a largest element (in which case $d(P)=e(P)$) to determine whether these parameters equal $h(P)-1$.
\end{remark}

\begin{thm}\label{thm:e} To decide whether $e(P)\le 1$ is {\bf NP}-complete.
\end{thm}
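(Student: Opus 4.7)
Since $h(P)-1 \leq e(P)$, the condition $e(P) \leq 1$ forces $P$ to have height at most $2$, so $P$ is a bipartite poset with minimal elements $A$ and maximal elements $B$, and the question becomes whether the Hasse diagram of $P$ is a (not necessarily induced) subgraph of the bipartite graph formed by two consecutive levels $k$ and $k{+}1$ of some Boolean lattice $B_n$. Concretely, one seeks integers $n,k$ and injections $f_A\colon A\to\binom{[n]}{k}$, $f_B\colon B\to\binom{[n]}{k+1}$ with $f_A(a)\subset f_B(b)$ whenever $a<b$ in $P$. The crucial structural property of this host graph is that it is \emph{$C_4$-free}: if distinct $S_1,S_2\in\binom{[n]}{k}$ both lie in distinct $T_1,T_2\in\binom{[n]}{k+1}$, then $T_1\cap T_2\supseteq S_1\cup S_2$ has size at least $k{+}1$, a contradiction. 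Hence for each $b\in B$ the images of its neighbors are exactly $|N(b)|$ different size-$k$ subsets of $f_B(b)$, each labelled by the ``pivot'' $c(a,b):=f_B(b)\setminus f_A(a)$.

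For NP-membership I would first bound $n$ polynomially in $|V(P)|$. Starting from any valid embedding, one prunes any element of $[n]$ lying in every image (decreasing both $n$ and $k$) or in none (decreasing $n$), and then observes that in a minimal embedding every surviving element must serve as a pivot for some Hasse edge, so $n$ is at most polynomial in $|V(P)|$. A non-deterministic algorithm then guesses $n,k$ and $f$ and verifies the conditions. For NP-hardness I would reduce from $3$-SAT. Given a $3$-CNF formula $\varphi$, one constructs a bipartite graph $G_\varphi$, viewed as the Hasse diagram of a height-$2$ poset, from three kinds of gadgets: \emph{variable gadgets}, each admitting exactly two canonical embedding types that encode the truth value of a variable; \emph{clause gadgets} whose embeddable configurations correspond exactly to satisfied clauses; and \emph{wire gadgets} that rigidly propagate a chosen truth value between variables and clauses. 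The rigidity is powered by the $C_4$-free observation above: whenever several $A$-vertices share a common $B$-neighbor $b$, their images are forced to be among the $k{+}1$ one-element deletions of $f_B(b)$, so concatenating such locally-forced configurations pins the global embedding up to a single binary choice at each variable.

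The main obstacle will be the gadget design. Each gadget together with its interfaces must always admit \emph{some} local embedding, so that only global consistency of the encoded assignment can cause failure; yet it must be rigid enough to forbid spurious embeddings that short-circuit the encoded constraint; and all this has to be achieved while keeping the required ground set size polynomial in $|\varphi|$. A secondary subtlety is justifying carefully the polynomial bound on $n$ in the NP-membership argument, since the parameter $n$ of the host Boolean lattice is part of the search rather than fixed by the input.
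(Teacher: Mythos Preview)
Your write-up is a plan rather than a proof: you correctly observe that two consecutive levels form a $C_4$-free bipartite host and you propose a gadget reduction from \textsc{3-SAT}, but you explicitly defer the design of the variable, clause, and wire gadgets, and that design is exactly where the content lies.

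The concrete obstacle you do not handle is the following. In an embedding into levels $k,k{+}1$ the parameter $k$ is not fixed; it must grow with the instance. Hence a vertex $b$ on the upper level that has only a few Hasse-neighbors still has $k{+}1$ sets below its image, so a fresh low-degree neighbor attached to $b$ can land in any of roughly $k$ places. $C_4$-freeness only says that the \emph{already-used} neighbors of $b$ get distinct pivots; it does not, by itself, cut the choices for a new neighbor down to two. So your promised ``variable gadget with exactly two canonical embeddings'' is not available from $C_4$-freeness alone; some additional global rigidity has to be imported, and you have not said what it is.

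The paper solves precisely this. It reduces from \textsc{3-rainbow} rather than \textsc{3-SAT}, and instead of $C_4$-freeness it uses the stronger fact (its Observation on $L_2(k)$) that any weak embedding of the two middle levels of $B_k$ into two consecutive levels of any $B_n$ is distance-preserving. It plants a full copy of $L_2$ on $\{a,b,c,p,q,r\}$, and for each hyperedge $y_l$ a full copy on $\{a,b,c,p,q,r,z_l\}$; distance-preservation forces each such copy to sit over ``itself ${}+{}W$'' for an unknown common set $W$. The colour vertex $COL_i$ is then tied, via an auxiliary $X_{i,l}$, to one of the three specific neighbors $\{a,b,z_l\},\{a,c,z_l\},\{b,c,z_l\}$ of $\{a,b,c,z_l\}$ inside that rigid copy, which kills all the ``$W$-type'' images and leaves exactly three choices for $COL_i$. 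Your $C_4$-freeness is the $k=2$ shadow of this rigidity; the paper needs the full statement to anchor the gadget globally and eliminate the slack coming from $W$.

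A smaller point: your NP-membership pruning claim that ``every surviving ground-set element must serve as a pivot of some Hasse edge'' is false as stated. For instance, take $A=\{a_1,a_2,a_3\}$, $B=\{b\}$ with only $a_1<b$, embedded at levels $2,3$ of $B_3$ by $a_1\mapsto\{1,2\}$, $a_2\mapsto\{1,3\}$, $a_3\mapsto\{2,3\}$, $b\mapsto\{1,2,3\}$; no element is in every image or in none, the unique pivot is $3$, yet $1$ and $2$ survive, and $n$ cannot be reduced below $3$. A polynomial bound on $n$ does hold, but the argument needs more than counting pivots.
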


\begin{remark} The graph theoretic reformulation of Theorem~\ref{thm:e} is that it is {\bf NP}-complete to decide whether a given graph can be embedded into two consecutive levels of some hypercube.
\end{remark}

\begin{thm}\label{thm:34} To decide whether a poset can be weakly embedded into the union of the third and fourth level of some Boolean lattice is {\bf NP}-complete.
\end{thm}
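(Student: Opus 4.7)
The plan is to establish membership in \textbf{NP} (immediate: an embedding uses $n\le 4|P|$ ground elements and is verifiable in polynomial time) and then \textbf{NP}-hardness by reduction from 3-SAT (or a suitable variant such as 3-dimensional matching). First observe that any poset weakly embeddable into levels 3 and 4 of some $B_n$ has height at most 2, so the input $P$ may be assumed bipartite with minimal elements $X$ and maximal elements $Y$. The decision problem then reduces to finding distinct 3-subsets $f(x)\subset[n]$ for $x\in X$ and distinct 4-subsets $f(y)\subset[n]$ for $y\in Y$ such that $x\le y$ in $P$ implies $f(x)\subset f(y)$.

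The analysis hinges on the rigidity of 3-subsets inside a 4-subset: a 4-set contains exactly four 3-subsets, in canonical bijection with its elements via the omitted point. This immediately forces: (i) every $y\in Y$ has at most four lower neighbors; (ii) any two distinct $y_1,y_2\in Y$ sharing at least two lower neighbors satisfy $|f(y_1)\cap f(y_2)|\ge 3$, so their images differ in exactly one element; (iii) a minimal element $x$ lying below three or more maximal elements pins the common intersection of their images down to the single 3-set $f(x)$.

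With these rigidities in hand, I would design variable and clause gadgets encoding a 3-SAT instance. A variable gadget exploits the ``differ in exactly one element'' relation from (ii) to present precisely two valid embedding shapes, corresponding to the two truth values of the variable. A clause gadget attaches to its three variable gadgets via a small bipartite subposet that admits an embedding if and only if at least one of the three literals evaluates to true, using (iii) to propagate the chosen truth values into a shared 4-set. The main obstacle is the exceptional tightness of the level 3/4 combinatorics: the gadgets leave almost no slack, so great care must be taken both to rule out spurious embeddings that mix gadgets in unintended ways, and to prevent distinct gadgets from conflicting on overlapping ground-set elements. A natural simplification is to allocate a fresh block of ground-set coordinates to each gadget, so that global consistency is reduced to a small number of ``linker'' minimal elements that propagate truth values from the variable gadgets to the clause gadgets they feed into; correctness of the reduction then follows by a local case analysis of the two shapes of each variable gadget against the three-literal constraint in each clause gadget.
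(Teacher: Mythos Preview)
Your structural observations (i)--(iii) are correct and relevant, but what you have submitted is a plan, not a proof. The entire difficulty of the reduction lies in the concrete design of the gadgets, and you have not given a single one. ``I would design variable and clause gadgets'' followed by a list of desiderata is not a reduction; the acknowledged ``main obstacle'' is never addressed.

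There is also an internal tension in your sketch. You propose to allocate a \emph{fresh block} of ground-set coordinates to each gadget, and then to connect gadgets via \emph{linker} minimal elements. But a linker $x$ lying below a maximal element $y_1$ in a variable gadget and a maximal element $y_2$ in a clause gadget forces $f(x)\subset f(y_1)\cap f(y_2)$, so $|f(y_1)\cap f(y_2)|\ge 3$: the two $4$-sets share three coordinates and hence cannot live in disjoint blocks. So either the blocks are not actually disjoint (and then the ``spurious embedding'' problem you flagged is back in full force), or the linkers cannot exist as described. Resolving this is exactly the hard part, and you have not done it.

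By contrast, the paper reduces from \textsc{3-rainbow} (rainbow $3$-colorability of $3$-uniform hypergraphs) rather than \textsc{3-sat}, and avoids the block/linker tension by anchoring the whole construction on a \emph{single} central element $\{a,b,c\}$ of maximum degree. Every vertex $v_i$ contributes an upper element $\{a,b,c,x_i\}$ and a pendant lower element $COL_i$, whose three possible images $\{a,b,x_i\},\{a,c,x_i\},\{b,c,x_i\}$ encode the three colors; every hyperedge $y_l$ contributes $\{a,b,c,z_l\}$ with three further lower neighbors, which must occupy all three of $\{a,b,z_l\},\{a,c,z_l\},\{b,c,z_l\}$; a single degree-two element $X_{i,l}$ ties $COL_i$ to the appropriate neighbor of $\{a,b,c,z_l\}$. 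The rainbow condition is then exactly injectivity at each hyperedge. The point is that sharing the anchor $\{a,b,c\}$ across \emph{all} gadgets, rather than separating them into blocks, is what makes the construction both rigid and short.
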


\begin{remark} Both Theorems~\ref{thm:e} and \ref{thm:34} also hold for strong embeddings, as the respective posets used in their proofs can only have a strong embedding to the required structures (see Corollary \ref{weak2strong}).
\end{remark}

Finally, using our methods we also sketch the proof of a related result.

\begin{thm}\label{thm:j} To decide whether a graph is an induced subgraph of a Johnson graph is {\bf NP}-complete.
\end{thm}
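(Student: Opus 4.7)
The plan is to reduce from the problem of Theorem~\ref{thm:34}: given a poset $P$ with all elements at heights $3$ or $4$, decide whether $P$ admits a weak embedding into the union of the third and fourth levels of some $B_n$. Let $A$ (resp.\ $B$) denote the height-$3$ (resp.\ height-$4$) elements; we may assume, by the rigidity built into the construction used for Theorem~\ref{thm:34}, that every weak embedding of $P$ is automatically strong, and moreover that $P$ is ``saturated'' in the sense that whenever two elements of $B$ can share a $3$-set in a valid embedding, they already share a common lower neighbor in $P$.

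From such a $P$ I would build a graph $G$ on vertex set $V(P)$ (possibly augmented with a few gadget vertices) whose edges are: every Hasse edge of $P$; the pair $bb'$ for each $b,b' \in B$ with a common lower neighbor; and the pair $aa'$ for each $a,a' \in A$ with a common upper neighbor. The claim is that $G$ is an induced subgraph of some Johnson graph $J(n',k')$ if and only if $P$ has the desired $B_n$-embedding.

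For the forward direction, from a weak embedding $\phi$ of $P$ into levels $3$ and $4$ of $B_n$ I would construct the lifted map $\phi'$ into $4$-subsets of $[n]\cup\{*\}$, where $*$ is a fresh element: $\phi'(a) = \phi(a)\cup\{*\}$ for $a\in A$ and $\phi'(b) = \phi(b)$ for $b\in B$. A routine check, using strongness and saturation of $\phi$, shows that $\phi'$ realizes $G$ as an induced subgraph of $J(n+1,4)$: each edge type (Hasse, common-top, common-bottom) produces a pair of $4$-sets with intersection exactly $3$, and each non-edge produces a pair with intersection of size less than $3$.

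The main obstacle is the reverse direction. Given an arbitrary induced embedding of $G$ into some $J(n',k')$, one must show that the embedding has the rigid shape $k' = 4$, with all $\phi'(a)$ containing a common element $*$ and no $\phi'(b)$ containing $*$; stripping $*$ from the $A$-side then recovers a two-level embedding of $P$. To force this shape I would attach to $G$ small rigid Johnson-graph gadgets---analogous in spirit to the rigid subposets used in the proofs of Theorems~\ref{thm:e} and~\ref{thm:34}---that pin down $k'$ and create the required ``contains $*$'' versus ``avoids $*$'' bipartition across $A$ and $B$. Designing and verifying those gadgets, essentially reverse-engineering one bipartite level of a Boolean lattice from purely Johnson-graph adjacencies, is the technical core of the proof.
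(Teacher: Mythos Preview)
Your proposal has a genuine gap already in the forward direction, before one even gets to the reverse direction that you flag as unfinished.

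Take the specific poset $P$ from the proof of Theorem~\ref{thm:34} and consider two elements $COL_i$ and $COL_j$ in $A$ with $i\neq j$. These have no common upper neighbour in $P$ (the upper neighbours of $COL_i$ are $\{a,b,c,x_i\}$ and the various $X_{i,l}$, none of which coincide with those of $COL_j$), so in your graph $G$ they are non-adjacent. However, in a valid embedding one may well have $COL_i\mapsto\{a,b,x_i\}$ and $COL_j\mapsto\{a,b,x_j\}$, whose intersection is $\{a,b\}$; after adding $*$ to both, $\phi'(COL_i)\cap\phi'(COL_j)=\{a,b,*\}$ has size $3$, so these vertices become adjacent in $J(n+1,4)$. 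Thus $\phi'$ is not an induced embedding. Your saturation hypothesis is stated only for pairs in $B$, so it does not cover this; and symmetrising it to $A$ would force you to add new elements to $P$ and re-verify hardness, which is not automatic. The same phenomenon occurs for pairs $Z_{l,i}$, $Z_{l',i'}$ across different hyperedges.

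The paper avoids this two-level bookkeeping altogether: it reduces directly from {\sc 3-rainbow}, building $G$ inside a single Johnson level. Vertices $x_i$ and $z_l$ sit in one large clique (images $\{a,b,c,x_i\}$ and $\{a,b,c,z_l\}$, all sharing $\{a,b,c\}$), and a second clique $z'_l$ (images $\{p,q,r,z_l\}$) is linked to the first via a copy of $J(6,3)$ for each $l$; a distance-preservation lemma for Johnson-to-Johnson embeddings (the analogue of Observation~\ref{obs:L2}) makes these $J(6,3)$ gadgets rigid. The colour of $v_i$ is then read off from which pair of $\{a,b,c\}$ the vertex $XZ_{i,l}$ retains. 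There is no attempt to simulate two Boolean levels inside a Johnson graph, which is exactly the step that breaks in your construction.
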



\section{Preliminaries}

\subsection{Connection to graph embeddings}

It is well-known that directed and undirected graph embedding problems can be easily reduced to each other by simple gadgets.
\footnote{The interested reader can find a collection of similar reductions in Booth and Colbourn~\cite{BoothColbourn}.}
The same is true for weak poset embedding problems.
To reduce a weak poset embedding problem to a directed graph embedding problem, notice that $P$ weakly embeds to $Q$ if and only if the {\em transitive closure} of $P$ embeds to the {\em transitive closure} of $Q$.
To reduce a graph embedding problem to a weak poset embedding problem, let us denote by $\hat G$ the two-level poset obtained from a graph $G$ as follows.
The elements of $\hat G$ are the vertices and edges of $G$, and any edge is larger than its endpoints (these are the only relations).
Thus, the vertices of $G$ form an antichain in $\hat G$, the lower level, and the edges of $G$ also form an antichain in $\hat G$, the upper level.

\begin{prop}\label{prop:graph2poset} $G$ is a subgraph of $H$ if and only if $\hat G$ weakly embeds to $\hat H$.
\end{prop}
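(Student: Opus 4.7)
The proof splits naturally into the two implications, both of which exploit that $\hat G$ and $\hat H$ are posets of height two with vertices in the lower level and edges in the upper level. For the forward direction, given a subgraph embedding $\varphi:V(G)\to V(H)$ with $\varphi(u)\varphi(v)\in E(H)$ whenever $uv\in E(G)$, the plan is to define $f:V(\hat G)\to V(\hat H)$ by $f(v)=\varphi(v)$ for $v\in V(G)$ and $f(uv)=\varphi(u)\varphi(v)$ for $uv\in E(G)$. Injectivity follows from injectivity of $\varphi$ together with the fact that vertices and edges lie in disjoint levels of $\hat H$, so the two pieces of $f$ cannot collide. The only comparabilities in $\hat G$ are $u,v<uv$ for $uv\in E(G)$, and each is preserved because $\varphi(u),\varphi(v)<\varphi(u)\varphi(v)$ in $\hat H$.

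For the converse, assume $f:\hat G\to\hat H$ is a weak embedding. The key step is to show that $f$ respects the level decomposition whenever an element is constrained by at least one comparability. First, if $uv\in E(G)$ then $f(u)<f(uv)$ in $\hat H$; since lower-level elements of $\hat H$ have nothing strictly below them, $f(uv)$ must be an edge of $H$. Next, if $v\in V(G)$ is non-isolated, fix any edge $uv\in E(G)$: then $f(v)<f(uv)\in E(H)$, and because $E(H)$ is an antichain in $\hat H$, the image $f(v)$ cannot be an edge, forcing $f(v)\in V(H)$. Finally, for $uv\in E(G)$ the only elements of $\hat H$ strictly below the edge $f(uv)$ are its two endpoints, so $f(u),f(v)<f(uv)$ identifies $f(uv)$ with the edge of $H$ joining $f(u)$ and $f(v)$. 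Restricting $f$ to the non-isolated vertices of $G$ therefore yields an injection $\varphi$ into $V(H)$ that carries $E(G)$ into $E(H)$, and extending $\varphi$ injectively to the isolated vertices of $G$ by sending them to unused vertices of $H$ completes the subgraph embedding.

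The one delicate point, and the main obstacle in the backward direction, is the treatment of isolated vertices: for such a $v\in V(G)$ the image $f(v)$ is not forced by any comparability and could in principle land in $E(H)$, so the extension step in the previous paragraph requires $|V(H)|\ge|V(G)|$. This is harmless for the intended applications, where the graphs arising in the reductions have no isolated vertices; in general, one can preprocess by removing isolated vertices of $G$ and checking separately the trivial counting condition $|V(H)|\ge|V(G)|$, so that the equivalence holds as stated once this standard caveat is in place.
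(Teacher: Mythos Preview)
The paper does not actually supply its own proof of this proposition; it merely refers the reader to Booth and Colbourn. So there is no ``paper's proof'' to compare against, and your argument must be assessed on its own merits.

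Your forward direction is clean and correct. Your backward direction is also correct, and you deserve credit for spotting a genuine subtlety that the paper glosses over: as literally stated, the proposition is false in the presence of isolated vertices. A concrete counterexample is $G$ consisting of three isolated vertices and $H$ a single edge; then $\hat G$ is a three-element antichain, which weakly embeds into the three-element poset $\hat H$, yet $G$ is not a subgraph of $H$ since $|V(H)|<|V(G)|$. Your proposed fix --- assume $G$ has no isolated vertices, or add the counting hypothesis $|V(H)|\ge |V(G)|$ --- is exactly the right repair, and as you note it is harmless in the reductions where this proposition is used.

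One very small tightening: in the extension step you only need that the number of \emph{unused} vertices of $H$ is at least the number of isolated vertices of $G$, which is precisely what $|V(H)|\ge|V(G)|$ guarantees once the non-isolated vertices have been injected into $V(H)$. You state this correctly but might make the counting explicit.
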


The interested reader can find the simple proof of Proposition \ref{prop:graph2poset} in \cite{BoothColbourn}.
As deciding whether a graph is a subgraph of another graph, known as the {\sc subgraph isomorphism} problem, is {\bf NP}-complete \cite{Cook}, we get that weak embedding for posets is also {\bf NP}-complete.

\begin{cor} Deciding whether $P$ weakly embeds to $Q$ or not is {\bf NP}-complete, already if both $P$ and $Q$ have only two levels.
\end{cor}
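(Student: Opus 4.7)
The plan is straightforward: combine Proposition~\ref{prop:graph2poset} with the \textbf{NP}-completeness of \textsc{subgraph isomorphism} to produce a polynomial-time many-one reduction. First I would verify membership in \textbf{NP}: a candidate weak embedding $f\colon P\to Q$ is a list of $|P|$ elements of $Q$, and we only need to check injectivity and that $f(p)\le f(q)$ for every one of the at most $|P|^2$ comparable pairs in $P$; this check runs in time polynomial in $|P|+|Q|$.

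For hardness, given an arbitrary instance $(G,H)$ of \textsc{subgraph isomorphism}, I would output the two-level posets $\hat G$ and $\hat H$ defined immediately before Proposition~\ref{prop:graph2poset}. Their sizes are $|V(G)|+|E(G)|$ and $|V(H)|+|E(H)|$, and the comparability relations (each edge is above its two endpoints and nothing else) can be written down in linear time, so the reduction is polynomial. By Proposition~\ref{prop:graph2poset}, $G$ is a subgraph of $H$ if and only if $\hat G$ weakly embeds into $\hat H$, and by construction both $\hat G$ and $\hat H$ have exactly two levels (vertices below, edges above), both of which are antichains. Since \textsc{subgraph isomorphism} is \textbf{NP}-complete by \cite{Cook}, this proves the claimed hardness.

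There is essentially no obstacle here: the entire content of the corollary is packed into Proposition~\ref{prop:graph2poset}, which has already been cited. The only point worth emphasising is the final sentence of the reduction, namely that the posets produced have exactly two levels, which is what makes the corollary slightly stronger than the bare statement that weak embedding between general posets is \textbf{NP}-complete.
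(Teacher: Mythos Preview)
Your proposal is correct and follows essentially the same approach as the paper: derive \textbf{NP}-hardness directly from Proposition~\ref{prop:graph2poset} together with the \textbf{NP}-completeness of \textsc{subgraph isomorphism}, noting that the posets $\hat G,\hat H$ produced are two-level. You add an explicit (and routine) verification of \textbf{NP}-membership and of the polynomiality of the reduction, which the paper leaves implicit.
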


\begin{remark} Note that Theorem~\ref{thm:d} is {\em not} a strengthening of this corollary, as here $Q$ is also given as part of the input, while in Theorem \ref{thm:d} $B_n$ has exponential size (but its description, the binary encoding of $n$, is $\log \log$ of the size of $B_n$).
\end{remark}


\subsection{Uniqueness of embedding of two consecutive levels into two consecutive levels}

Let $L_2(k)$ denote the union of the two middle levels of $B_{k}$. 

\begin{obs}\label{obs:L2} Any weak embedding of $L_2(k)$ to $L_2(n)$ is {\em distance-preserving}, i.e., the distance between any two elements of $L_2(k)$ is the same as the distance between their images in $L_2(n)$.
\end{obs}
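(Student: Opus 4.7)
The plan is to first show that any weak embedding $f\colon L_2(k)\to L_2(n)$ is \emph{level-preserving}: every lower-level element of $L_2(k)$ has an upper-level neighbor, forcing its image to lie strictly below another image in $L_2(n)$, which can only happen if the image itself is in the lower level of $L_2(n)$. Identifying elements of $L_2(k)$ and $L_2(n)$ with their corresponding subsets, the Hasse-distance in each poset equals the symmetric difference of the two subsets, so it suffices to prove $|f(X)\triangle f(Y)|=|X\triangle Y|$ for all $X,Y\in L_2(k)$.

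I proceed by induction on $d:=|X\triangle Y|$. The cases $d\le 1$ are immediate since $f$ preserves cover relations. For $d\ge 2$, suppose for contradiction that $d':=|f(X)\triangle f(Y)|<d$; note $d'\equiv d\pmod 2$ because $f$ respects levels. Without loss of generality $X$ is in the lower level (the other cases are symmetric or dual, using the modification $X\setminus\{c\}$ in place of $X\cup\{a\}$). Write $S:=Y\setminus X$ and $T:=X\setminus Y$; both are nonempty when $d\ge 2$.

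The first key move is a parity step. Pick $a\in S$ and set $X_a:=X\cup\{a\}$; this is an upper neighbor of $X$ satisfying $|X_a\triangle Y|=d-1$, so induction gives $|f(X_a)\triangle f(Y)|=d-1$. Writing $f(X_a)=f(X)\cup\{g\}$ with $g\notin f(X)$, and setting $D:=f(X)\triangle f(Y)$, the identity $|D\triangle\{g\}|=d-1$ combined with $|D|=d'$ forces $d'=d-2$ and $g\notin D$, whence $g\notin f(Y)$.

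The second key move is a pigeonhole step. Pick $c\in T$ and set $X'':=X\cup\{a\}\setminus\{c\}\subset X_a$, so $f(X'')=f(X)\cup\{g\}\setminus\{h_{a,c}\}$ for a unique $h_{a,c}\in f(X)$. Since $|X''\triangle Y|=d-2$, induction gives $|f(X'')\triangle f(Y)|=|D|$; combined with $g\notin D$ this forces $h_{a,c}\in D\cap f(X)=f(X)\setminus f(Y)$. Letting $c$ range over $T$ with $a$ fixed, the $X''$'s are distinct and hence so are the $h_{a,c}$'s by injectivity of $f$, yielding an injection $T\hookrightarrow f(X)\setminus f(Y)$. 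A brief cardinality computation from $d'=d-2$ shows $|f(X)\setminus f(Y)|=|T|-1$ in both the same-level case ($|T|=d/2$, $|f(X)\setminus f(Y)|=d/2-1$) and the different-level case ($|T|=(d-1)/2$, $|f(X)\setminus f(Y)|=(d-3)/2$), giving the required contradiction. The only subtlety is the uniform ``$-1$'' in this size relation, which is the ultimate source of the impossibility and must be tracked carefully through the two parity cases.
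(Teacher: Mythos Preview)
Your proof is correct. One minor omission: you set up the contradiction by assuming $d'<d$, so at the end you have only shown $d'\ge d$; to finish you also need $d'\le d$, which holds because level-preservation makes $f$ a graph homomorphism of the Hasse diagrams (a shortest $X$--$Y$ path maps to a walk of the same length). Alternatively, your parity step already rules out $d'>d$, so you could simply have assumed $d'\ne d$ instead.

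The route is genuinely different from the paper's. The paper inducts on $k$: any two non-antipodal elements of $L_2(k)$ lie in a sub-copy of $L_2(k')$ with $k'<k$, so by the inductive hypothesis their distance is preserved, and only the antipodal pair $x,\bar x$ at distance $k$ requires a direct pigeonhole on the neighbors of a vertex adjacent to $\bar x$. You instead induct directly on the distance $d$, working explicitly with symmetric differences and never invoking the sub-$L_2$ structure. The core counting is essentially the same (your comparison of $|T|$ with $|f(X)\setminus f(Y)|$ is exactly the paper's count of how many neighbors of $f(y)$ lie on a shortest path to $f(x)$), but your packaging is more self-contained and makes the level-preservation step explicit, which the paper leaves implicit.
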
 
\begin{proof} The proof is by induction on $k$.
The statement is trivially true for $k=0,1$.
Take a weak embedding $f: L_2(k)\to L_2(n)$.
Pick an arbitrary element $x\in L_2(k)$ and denote the (unique) element at distance $k$ from it by $\bar x$.
The distance between $x$ and any element other than $\bar x$ is preserved by induction.
Take a neighbor $y$ of $\bar x$ --- 
using induction, the distance of $f(y)$ and $f(x)$ is $k-1$.
This implies that there are exactly $\lfloor k \rfloor$ neighbors of $f(y)$ that fall on a shortest path between $f(x)$ and $f(y)$, thus at distance $k-2$ from $f(x)$.
Since $y$ has exactly $\lfloor k \rfloor$ neighbors in $L_2(k)$ that are different from $\bar x$, and $f$ maps each of them on a shortest path between $x$ and $y$ by induction, there are no more neighbors of $f(y)$ that could be on a shortest path between $x$ and $y$, thus all of them have distance $k$ from $f(x)$.
This implies that $f(\bar x)$ must be at distance $k$ from $f(x)$.
\end{proof}

\begin{cor}\label{weak2strong} Any weak embedding of $L_2(k)$ to $L_2(n)$ is also a strong embedding.
\end{cor}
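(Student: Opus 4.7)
The plan is to leverage Observation~\ref{obs:L2} directly: once we know that the weak embedding $f \colon L_2(k) \to L_2(n)$ preserves Hasse-diagram distances, we essentially get strong embedding for free, because in two consecutive levels of a Boolean lattice comparability is determined purely by the Hasse diagram.

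The key observation I would use is that within $L_2(m)$ (for any $m$), two distinct elements $u, v$ are comparable if and only if they lie one in each level and differ by exactly one element, which is exactly the condition that $uv$ is an edge of the Hasse diagram, i.e., that their Hasse-distance is $1$. So ``comparable'' and ``Hasse-distance $1$'' coincide on $L_2(m)$.

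Now suppose $f(p) \le f(q)$ in $L_2(n)$ with $p \ne q$. Since $f$ is injective, $f(p) \ne f(q)$, so $f(p)$ and $f(q)$ have Hasse-distance $1$ in $L_2(n)$. By Observation~\ref{obs:L2}, $p$ and $q$ have Hasse-distance $1$ in $L_2(k)$, and therefore are comparable. So either $p \le q$, which is what we want, or $q < p$. In the latter case, the weak embedding property would give $f(q) \le f(p)$, which combined with $f(p) \le f(q)$ forces $f(p) = f(q)$, contradicting injectivity. Hence $p \le q$, and $f$ is a strong embedding.

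I do not anticipate a real obstacle here; the whole content of the statement is packed into the distance-preserving Observation, and the corollary is simply the remark that on two consecutive levels, order is determined by adjacency in the Hasse diagram. The only care needed is to rule out the wrong direction of comparability, which is handled by the injectivity of $f$.
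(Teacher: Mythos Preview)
Your argument is correct and is exactly the intended one: the paper states this as an immediate corollary of Observation~\ref{obs:L2} without spelling out the details, and your proof is precisely how one fills them in --- comparability in $L_2(m)$ coincides with Hasse-distance~$1$, so distance-preservation plus injectivity yields the strong embedding property.
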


\begin{cor} For any two elements $p,q \in B_n$ that are on the same level or on consecutive levels at distance $k$, there is a unique\footnote{Unique up to composition with an element of $Aut(L_2(k))$.} embedding of $L_2(k)$ to $B_n$ whose image contains both $p$ and $q$.
\end{cor}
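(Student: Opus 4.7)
The approach splits into existence and uniqueness. For existence, set $S = p \cap q$ and $T = p \cup q$; the distance hypothesis gives $|T \setminus S| = |p \triangle q| = k$. The Boolean interval $[S, T]$ in $B_n$ is isomorphic to $B_k$ via $X \mapsto X \setminus S$, and under this isomorphism $p, q$ lie on levels $|p \setminus S|$ and $|q \setminus S|$. The hypothesis that $p, q$ are on the same level or on consecutive levels of $B_n$ forces these numbers to equal $k/2$ (same-level case, $k$ even) or $(k-1)/2$ and $(k+1)/2$ (consecutive-level case, $k$ odd); in either case both $p$ and $q$ lie in the two middle levels of the sub-$B_k$, which form an embedded copy of $L_2(k)$ in $B_n$ containing both points.

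For uniqueness, suppose $f : L_2(k) \to B_n$ is any embedding whose image contains $p, q$, and write $p = f(a), q = f(b)$. The heart of the argument is to show that the image of $f$ is trapped inside the Boolean interval $[p \cap q, p \cup q]$, and in fact occupies two specific consecutive levels of this sub-$B_k$. Each Hasse edge of $L_2(k)$ goes to a comparable pair in $B_n$; tracing the images of shortest paths between $a$ and $b$, of length at most $\mathrm{diam}(L_2(k)) = k$, and using $d_{B_n}(p, q) = k$, every such edge must map to a covering edge of $B_n$ and every intermediate image must lie on a geodesic from $p$ to $q$. Once the image is confined inside such a copy of $L_2(k)$, the map $f$ can be viewed as an embedding of $L_2(k)$ into $L_2(n')$ for an appropriate $n'$, and Observation~\ref{obs:L2} applies: $f$ is distance-preserving, and so the positions of all other elements of $L_2(k)$ are determined by the images of $a$ and $b$. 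The transitive action of $\mathrm{Aut}(L_2(k))$ on pairs of vertices at maximum distance $k$ absorbs the remaining freedom, yielding uniqueness up to composition with an element of $\mathrm{Aut}(L_2(k))$.

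The main obstacle is the confinement step: a priori a weak embedding could spread its image over more than two levels of $B_n$, and to rule this out one really needs the tightness $d_{B_n}(p, q) = k = \mathrm{diam}(L_2(k))$, which leaves no slack for the image of any geodesic in $L_2(k)$ to deviate from a monotone chain in $B_n$. Once the confinement is in place, the rest of the argument is essentially Observation~\ref{obs:L2} together with the vertex-transitivity of $\mathrm{Aut}(L_2(k))$ on antipodal pairs.
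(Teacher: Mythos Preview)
Your existence argument via the interval $[p\cap q,\,p\cup q]$ is correct and more explicit than anything in the paper, which does not address existence separately.

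For uniqueness, the paper's route is much shorter: it simply remarks that preserving the order relations of the two-level poset $L_2(k)$ determines which two consecutive levels of $B_n$ the image must occupy (namely the one or two levels on which $p,q$ already sit), and then applies Observation~\ref{obs:L2} directly to that $L_2(n)$.  There is no geodesic analysis at all.

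Your confinement step, by contrast, does not close as written.  From $d_{B_n}(p,q)=k$ and a shortest $a$--$b$ path in $L_2(k)$ of length $m\le k$, the image in $B_n$ is a walk from $p$ to $q$ of total length $\sum_i d_{B_n}\bigl(f(a_{i-1}),f(a_i)\bigr)$.  Each summand is at least $1$ (consecutive images are comparable, hence at Hasse-distance $\ge 1$), and the triangle inequality gives the sum $\ge k$.  But you never obtain the reverse inequality $\sum\le k$, so you cannot conclude that every summand equals $1$, nor that each Hasse edge of $L_2(k)$ maps to a covering edge of $B_n$, nor that intermediate images lie on a $B_n$-geodesic from $p$ to $q$.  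A weak embedding could a priori send a single Hasse edge of $L_2(k)$ to a comparable pair two or more levels apart, and the equality $k=\operatorname{diam}(L_2(k))$ by itself supplies no ``slack'' argument against this.  Thus your claimed confinement to the interval $[p\cap q,\,p\cup q]$ and to two consecutive levels is not established by the geodesic reasoning you sketch; once that gap is acknowledged, the remainder of your argument (invoking Observation~\ref{obs:L2} and the transitivity of $\mathrm{Aut}(L_2(k))$ on antipodal pairs) is fine and matches the paper.
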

\begin{proof} Since we need to preserve order relations when embedding a poset, it is determined which two levels of $B_n$ we need to embed into, and can thus apply Observation \ref{obs:L2}.
\end{proof}

We will denote the above unique embedding of $L_2(k)$ to $B_n$ by $L_2[p;q]$.
Sometimes we will also use $L_2[p;q]$ to denote the ``upside-down reversal'' of this poset, which is different when $k$ is even, as in addition to the level $L$ of $p$ and $q$, it uses the level above $L$ instead of the level below $L$  --- this will not lead to confusion, as from the context it will always be clear which two levels we embed into.

\subsection{NP-complete 3-uniform hypergraph coloring problems}

We will use the {\bf NP}-completeness of {\sc mon-nae-3-sat}, which is (equivalent to) the problem of deciding whether the vertices of a $3$-uniform hypergraph are properly $2$-colorable, and {\sc 3-rainbow}, which is the problem of deciding whether the vertices of a $3$-uniform hypergraph are $3$-colorable, such that every hyperedge contains each color exactly once (such colorings are called {\em rainbow}).
The {\bf NP}-completeness of {\sc mon-nae-3-sat} was proved by Lov\'asz \cite{Lovasz} (it also follows from Schaefer's dichotomy theorem~\cite{Schaefer}), but we could not find our {\sc 3-rainbow} problem in the literature; it is an easy exercise to show that is {\bf NP}-complete.
For completeness, we sketch a proof independently discovered by Jukka Suomela~\cite{Suomela} and Antoine Amarilli~\cite{Amarilli}.

\begin{proof}(Suomela; Amarilli)
We reduce to our problem whether the chromatic number of a graph is at most $3$.
Construct a $3$-uniform hypergraph \HH from a graph $G$ as follows.
The vertices of \HH are the vertices and edges of $G$, and the edges of \HH are the triples $\{(u,v,uv)\mid uv$ is an edge of $G\}$.
It is straightforward to see that \HH has a rainbow $3$-coloring if and only if $G$ has a proper $3$-coloring.
\end{proof}

\section{Proof of Theorem~\ref{thm:d}}
This section contains the proof of Theorem~\ref{thm:d}.
The problem is trivially contained in {\bf NP}, thus it is enough to prove that it is {\bf NP}-hard to decide whether $P\subset B_{h(P)}$ for an input poset $P$ that has a smallest and a largest element.
The reduction is from {\sc mon-nae-3-sat}, the problem of deciding whether the vertices of a $3$-uniform hypergraph \HH are properly $2$-colorable.\\

The vertices of \HH will be denoted by $v_1,\ldots, v_n$.
The height of the poset $P$ will be $3n$ and it will be a union of two sets, $P_1$ and $P_2$.
The restriction of $P$ to $P_1$ is isomorphic to a subposet of $B_{3n}$, which is over the elements $X=\{a_1, b_1, c_1, a_2, b_2, \ldots, c_n\}$.
First we describe $P_1$, then we will describe the elements of $P_2$ via their relations to $P_1$.
If an element $p\in P$ is mapped to some set $S$, then we write $p\to S$.
The question will be to decide whether $P$ embeds to $B_{3n}$ or not.\\

$P_1$ contains every subset of $X$ with at most $9$ elements, except the pairs of the form $\{a_i,b_i\}$ and $\{a_i,c_i\}$, and except that from the sextuples $P_1$ only contains the ones of the form $\{a_i,b_i,a_j,b_j,a_k,b_k\}$ or $\{a_i,c_i,a_j,c_j,a_k,c_k\}$.
(So $P_1$ contains $2\binom n3$ sextuples.)
$P_1$ also contains a chain of length $3n-8$ for every $9$-element set $S$ starting at $S$ and ending in $X$, guaranteeing that $S$ has to be at least $3n-9$ levels lower than $X$.
(This requires at most $\binom n9(3n-10)$ additional elements.)
Thus, the smallest element of $P_1$ is the empty set, and its largest element will be $X$.
This implies that if $P$ embeds into $B_{3n}$, then all the elements of $P_1$ really must be embedded into the same level as the subset of $X$ that was used to define them.

For notational convenience, after a suitable renaming/permutation of the base set, it can even be achieved that each one element set of $P_1$ is mapped to ``itself,'' e.g., $\{a_i\}\to \{a_i\}$.
This also implies that all elements of $P_1$ are mapped to the set defining them.\\

Now we describe the elements of $P_2$, which will depend on the hypergraph \HH.
These are not defined as a subset of $X$ but by their relations to some of the earlier defined subsets.

$P_2$ contains for each vertex $v_i$ an element denoted by $x_i$ such that $\{a_i\}<x_i<\{a_i,b_i,c_i\}$.
Thus if $P$ embeds into $B_{3n}$, then $x_i\to\{a_i,b_i\}$ or $x_i\to\{a_i,c_i\}$.

Finally, $P_2$ contains for every hyperedge $y_l=\{v_i,v_j,v_k\}$ an element $Z_l$ for which $x_i,x_j,x_k<Z_l<\{a_i,b_i,c_i,a_j,b_j,c_j,a_k,b_k,c_k\}$.
Thus if $P$ embeds into $B_{3n}$, then $Z_l$ has to be the unique sextuple that is above $x_i,x_j,x_k$, so its position is determined by the choice of $x_i,x_j,x_k$.\\

As $\{a_i,b_i,a_j,b_j,a_k,b_k\}$ and $\{a_i,c_i,a_j,c_j,a_k,c_k\}$ must have the respective elements of $P$ mapped to them in any weak embedding of $P$ into $B_{3n}$, we have that $P$ embeds into $B_{3n}$ if and only if there is a choice of the position of the elements $x_i$ such that for no hyperedge $\{v_i,v_j,v_k\}$ we have ($x_i\to\{a_i,b_i\}$ and $x_j\to\{a_j,b_j\}$ and $x_k\to\{a_k,b_k\}$) or ($x_i\to\{a_i,c_i\}$ and $x_j\to\{a_j,c_j\}$ and $x_k\to\{a_k,c_k\}$).
But if $x_i\to\{a_i,b_i\}$ corresponds to coloring $v_i$ red and $x_i\to\{a_i,c_i\}$ corresponds to coloring $v_i$ blue, this is clearly equivalent to whether \HH is $2$-colorable or not.\\

This finishes the proof of Theorem~\ref{thm:d}.

\section{Proof of Theorem~\ref{thm:34}}
This section contains the proof of Theorem~\ref{thm:34}.
The problem is trivially contained in {\bf NP}, thus it is enough to prove that it is {\bf NP}-hard to decide whether a given poset $P$ has a weak embedding to the union of the third and fourth levels of some Boolean lattice.
The reduction is from {\sc 3-rainbow}, which is the problem of deciding whether the vertices of a $3$-uniform hypergraph have a rainbow $3$-coloring, i.e., a $3$-coloring where every hyperedge contains each color exactly once.\\

Now we describe the elements of the two-level poset $P$ that we construct from \HH.
Most elements of $P$ will be defined by subsets of an unspecified base set, with the containment relations preserved.

There is an element $\{a,b,c\}$ that can be thought of as the center of $P$, and will be the (unique) element with the most neighbors among all elements of $P$.
In any embedding $\{a,b,c\}$ will have to go somewhere on the third level, as there are several elements that are bigger than it, thus, with a slight abuse of notation, we can suppose that it goes to $\{a,b,c\}$.

For every vertex $v_i$, add an element $\{a,b,c,x_i\}$ to $P$, and for every hyperedge $y_l$, add an element $\{a,b,c,z_l\}$ to $P$ (where $x_i$ and $z_l$ are different for each vertex and for each hyperedge).
We can again suppose that these elements are mapped to ``themselves''.
The way the elements corresponding to vertices and hyperedges can be distinguished is that each $\{a,b,c,x_i\}$ has only one other neighbor, $COL_i$, which thus can be mapped to either $\{a,b,x_i\}$, $\{a,c,x_i\}$ or $\{b,c,x_i\}$, but each $\{a,b,c,z_l\}$ has three further neighbors, $Z_{l,i}$, $Z_{l,j}$ and $Z_{l,k}$, where $y_l=\{v_i,v_j,v_k\}$.
The three neighbors, $Z_{l,i}$, $Z_{l,j}$ and $Z_{l,k}$, need to be mapped in some permutation to the three neighbors of $\{a,b,c,z_l\}$ that are different from $\{a,b,c\}$, i.e., to $\{a,b,z_l\}$, $\{a,c,z_l\}$ and $\{b,c,z_l\}$.

Finally, for every vertex $v_i\in y_l$, there is an element $X_{i,l}$ that has two neighbors, $COL_i$ and $Z_{l,i}$.
Therefore, $X_{i,l}$ and $Z_{l,i}$ must be mapped either to $\{a,b,x_i,z_l\}$ and $\{a,b,z_l\}$, or to $\{a,c,x_i,z_l\}$ and $\{a,c,z_l\}$, or to $\{b,c,x_i,z_l\}$ and $\{b,c,z_l\}$, depending on $COL_i$.\\

We now have to show that $P$ can be weakly embedded into the union of the third and fourth levels of some $B_n$ if and only if \HH has a rainbow $3$-coloring.
If \HH has a rainbow $3$-coloring, then let the image of $COL_i$ be $\{a,b,x_i\}$ if $v_i$ is colored with the first color, $\{a,c,x_i\}$ if $v_i$ is colored with the second color, and $\{b,c,x_i\}$ if $v_i$ is colored with the third color.
From this the embedding of $X_{i,l}$ and $Z_{l,i}$ follows.
The fact that all three colors appear at each hyperedge $y_l=\{v_i,v_j,v_k\}$ guarantees that the three neighbors of $\{a,b,c,z_l\}$, $Z_{l,i}$, $Z_{l,j}$ and $Z_{l,k}$, will not conflict with each other.
If $P$ has an embedding, then a rainbow $3$-coloring of \HH can be derived in a similar way.\\

This finishes the proof of Theorem~\ref{thm:34}.

\begin{remark}
The above constructed poset $P$ can in fact be embedded into the union of the $\chi$-th and $(\chi+1)$-st levels of some $B_n$ if and only if \HH has a rainbow $\chi$-coloring.
To see this, the above proof needs to be modified only in that $\{a,b,c\}$ has to go to some set with $\chi$ elements, and thus there are $\chi$ choices instead of three for the image of each $COL_x$.
\end{remark}

\section{Proof of Theorem~\ref{thm:e}}
This section contains the proof of Theorem~\ref{thm:e}.
The main idea is similar to the proof of Theorem~\ref{thm:34}, but it is more complicated, and we extensively use Observation~\ref{obs:L2}.
As before, the {\bf NP}-membership is trivial, and we prove ${\bf NP}$-hardness by constructing a poset $P$ from a hypergraph \HH such that \HH has a rainbow $3$-coloring if and only if $e(P)\le 1$, i.e., if $P$ can be embedded into some two consecutive levels of a Boolean lattice.
We will denote the union of ``these'' two levels by $L_2$.
This is a bit of a cheating, since we do not know which two levels of which Boolean lattice $P$ could be embedded into.
One can think of $L_2$ either as the union of two sufficiently large levels, or even as the union of two infinite levels, for which our question could be equivalently formulated.\\

Now we describe the elements of the two-level poset $P$.
Most elements of $P$ will be defined by subsets of an unspecified base set, with the containment relations preserved.

There will be two elements, $\{a,b,c\}$ and $\{p,q,r\}$, which play a central role in the construction.
$P$ will contain all $\binom 63+\binom 64$ elements of $L_2[\{a,b,c\};\{p,q,r\}]$.
Observation~\ref{obs:L2} implies that when we weakly embed $P$ to $L_2$, then the distance of the images of $\{a,b,c\}$ and $\{p,q,r\}$ will be six, thus we can conclude that $a,b,c$, $p,q$ and $r$ must all be different.
We can also suppose that $\{a,b,c\}$ and $\{p,q,r\}$ are, respectively, mapped to some elements $\{a,b,c,W\}$ and $\{p,q,r,W\}$ (which we can consider as ``themselves'') where $W$ contains some additional elements of the base set.

For every hyperedge $y_l$, we add $L_2[\{a,b,c,z_l\};\{p,q,r,z_l\}]$ to $P$ (where $z_l$ is different for each hyperedge).
With another application of Observation~\ref{obs:L2}, we can suppose that these elements are mapped to ``themselves $+$ $W$''.

For every vertex $v_i$, we add two neighboring vertices, $\{a,b,c,x_i\}$ and $COL_i$ to $P$.
We can suppose that $\{a,b,c,x_i\}$ is mapped to $\{a,b,c,x_i,W\}$.
$COL_i$ is ideally mapped to one of $\{a,b,x_i,W\}$, $\{a,c,x_i,W\}$ and $\{b,c,x_i,W\}$; for this, we have to eliminate the possibility of it being mapped to some $\{a,b,c,x_i,W\setminus\{w\}\}$.
This is why we needed all the complications compared to the construction used to prove Theorem~\ref{thm:34}.

Finally, for every vertex $x_i$ that is in the hyperedge $z_l$, we add one more degree two element, $X_{i,l}$, that is connected to $COL_i$ and $Z_{l,i}$.
The element $Z_{l,i}$ will be one of the elements from $L_2[\{a,b,c,z_l\};\{p,q,r,z_l\}]$ that neighbors $\{a,b,c,z_l\}$, i.e., one of $\{a,b,z_l\}$, $\{a,c,z_l\}$ and $\{b,c,z_l\}$.
Using Observation~\ref{obs:L2}, we know that $Z_{l,i}$ has to be embedded as one of $\{a,b,z_l,W\}$, $\{a,c,z_l,W\}$ and $\{b,c,z_l,W\}$.
Therefore, $X_{i,l}$ must be mapped either to $\{a,b,x_i,z_l,W\}$, $\{a,c,x_i,z_l,W\}$ or $\{b,c,x_i,z_l,W\}$, and thus $COL_i$ to $\{a,b,x_i,W\}$, $\{a,c,x_i,W\}$ or $\{b,c,x_i,W\}$.\\

We now have to show that \HH has a rainbow $3$-coloring if and only if $P$ can be weakly embedded into $L_2$.
If \HH has a rainbow $3$-coloring, then let the image of $COL_i$ be $\{a,b,x_i,W\}$ if $v_i$ is colored with the first color, $\{a,c,x_i,W\}$ if $v_i$ is colored with the second color, and $\{b,c,x_i,W\}$ if $v_i$ is colored with the third color.
From this the embedding of $X_{i,l}$ and $Z_{l,i}$ follows.
The fact that all three colors appear at each hyperedge $y_l=\{v_i,v_j,v_k\}$ guarantees that the three neighbors of $\{a,b,c,z_l,W\}$, $Z_{l,i}$, $Z_{l,j}$ and $Z_{l,k}$, will not conflict with each other.
If $P$ has an embedding, then a rainbow $3$-coloring of \HH can be derived in a similar way.\\

This finishes the proof of Theorem~\ref{thm:e}.

\section{Proof of Theorem~\ref{thm:j}}
The vertices of the Johnson graph $J(n,k)$ are the $k$-element subsets of an $n$-element base set, and two vertices are connected if they differ in exactly two elements.
A graph $G$ is an {\em induced Johnson subgraph} if there exists an induced copy of $G$ in $J(n,k)$ for some $n,k$.
These graphs were defined in \cite{NaimiShaw} and later studied in \cite{MalikAli}.
The rest of this section contains a sketch of the proof of Theorem~\ref{thm:j}.
(The details are omitted due to the similarity to the proof of Theorem~\ref{thm:e}.)\\

The problem is trivially in {\bf NP}.
We prove {\bf NP}-hardness by constructing a graph $G$ from any $3$-uniform \HH such that $G$ is an induced Johnson subgraph if and only if \HH has a rainbow $3$-coloring.
We need the following variant of Observation~\ref{obs:L2}, which can be similarly proved by induction.

\begin{obs}\label{obs:j} For any $n,k,n',k'$, any embedding of $J(n,k)$ to $J(n',k')$ is distance-preserving.
\end{obs}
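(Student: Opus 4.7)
The plan is to prove Observation~\ref{obs:j} by induction on $k$, in close analogy with the induction used for Observation~\ref{obs:L2}. The base cases $k = 0$ (a single vertex) and $k = 1$ (where $J(n,1) = K_n$ has all distances at most $1$) are immediate.

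For the inductive step, let $f: J(n, k) \to J(n', k')$ be an (induced) embedding and take $A, B \in J(n, k)$ at distance $d = k - |A \cap B|$. If $A \cap B \neq \emptyset$, pick $x \in A \cap B$; the $k$-subsets of $[n]$ containing $x$ induce a copy of $J(n-1, k-1)$ inside $J(n, k)$, and in this subgraph $A$ and $B$ still have distance $d$. Restricting $f$ gives an induced embedding $J(n-1, k-1) \to J(n', k')$, so the inductive hypothesis yields $d(f(A), f(B)) = d$ at once.

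The remaining case is $A \cap B = \emptyset$, so that $d = k$ is the diameter. Write $A = \{a_1, \ldots, a_k\}$, $B = \{b_1, \ldots, b_k\}$. Any neighbor $C := (B \setminus \{b_i\}) \cup \{a_j\}$ of $B$ meets $A$, so by the intersection case $d(f(A), f(C)) = k - 1$, which gives the upper bound $d(f(A), f(B)) \leq k$. The reverse inequality is the delicate part. The $k^2$ ``forward'' neighbors $C_{ij} := (B \setminus \{b_i\}) \cup \{a_j\}$ of $B$ towards $A$ have mutual distances at most $2$, so by the intersection case their images form an induced copy of the rook graph $K_k \Box K_k$ in $J(n', k')$, each adjacent to $f(B)$ and each at distance $k - 1$ from $f(A)$. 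Suppose for contradiction that $d^* := d(f(A), f(B)) < k$. Since any neighbor of $f(B)$ has distance in $\{d^* - 1, d^*, d^* + 1\}$ from $f(A)$, the existence of the $f(C_{ij})$ at distance $k - 1$ forces $d^* \in \{k - 2, k - 1\}$.

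For each of these two values of $d^*$ the set of neighbors of $f(B)$ at distance $k - 1$ from $f(A)$ can be enumerated explicitly as a (disjoint union of) rook graph(s) whose parameters depend on $d^*$, $k'$ and $n'$: for $d^* = k - 1$ it splits as $K_{k-1} \Box K_{n'-k'-k+1} \sqcup K_{k'-k+1} \Box K_{k-1}$, and for $d^* = k - 2$ it is a single rook graph $K_{k'-k+2} \Box K_{n'-k'-k+2}$. Combining this with the easy lemma that $K_k \Box K_k$ embeds as an induced subgraph of $K_p \Box K_q$ only when $p \geq k$ and $q \geq k$, together with the connectedness of $K_k \Box K_k$, one rules out $d^* = k - 1$ immediately. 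The main obstacle is the residual case $d^* = k - 2$, where the single rook graph can a priori accommodate $K_k \Box K_k$; here I would additionally invoke the symmetric family of forward neighbors of $A$ towards $B$ (whose images give a second induced $K_k \Box K_k$) and apply the inductive hypothesis to the mutual distances between the two families, constraining their joint placement enough to produce the contradiction. This double-family bookkeeping is where the bulk of the work lies and parallels the corresponding counting step in the proof of Observation~\ref{obs:L2}.
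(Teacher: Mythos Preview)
The paper gives no detailed proof of Observation~\ref{obs:j}, only the remark that it ``can be similarly proved by induction'' in analogy with Observation~\ref{obs:L2}. Your inductive scheme --- reduce to $J(n-1,k-1)$ by fixing a common element whenever $A\cap B\neq\emptyset$ --- is precisely the natural analog, and it is correct. (Your reading of ``embedding'' as \emph{induced} embedding is also necessary: $J(4,2)$ embeds non-inducedly into any $K_6\subset J(n',k')$, and that map is not distance-preserving. In context this is clearly what the paper needs for Theorem~\ref{thm:j}.)

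Your treatment of the diameter case $A\cap B=\emptyset$, however, has a genuine gap. You correctly pin $d^*=d(f(A),f(B))$ to $\{k-2,k-1,k\}$, and your rook-graph argument disposing of $d^*=k-1$ is fine (the ``easy lemma'' that $K_k\Box K_k$ has no induced copy in $K_{k-1}\Box K_m$ follows by looking at the open neighbourhood of a single vertex, which is $K_{k-1}\sqcup K_{k-1}$ on one side and $K_{k-2}\sqcup K_{m-1}$ on the other). But for $d^*=k-2$ you give only a vague plan, and the ``double-family bookkeeping'' you propose is both unfinished and unnecessary. There is a one-line argument much closer to the counting in Observation~\ref{obs:L2}: pick a single forward neighbour $C=C_{11}$ of $B$ and look at \emph{its} forward neighbours toward $A$. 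There are exactly $(k-1)^2$ of them, each meeting $A$ in two elements, so by your intersection case their images are precisely the $(k-1)^2$ neighbours of $f(C)$ at distance $k-2$ from $f(A)$. Since $B$ is a neighbour of $C$ distinct from all of these, $f(B)$ is a neighbour of $f(C)$ not at distance $k-2$ from $f(A)$, and $d^*=k-2$ is excluded. This is the step that actually parallels the proof of Observation~\ref{obs:L2}; your rook-graph argument for $d^*=k-1$ is the genuinely new ingredient needed because Johnson graphs, unlike $L_2(k)$, are not bipartite.
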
 

Denote the vertices of \HH by $v_1,\ldots,v_n$ and its hyperedges by $y_1,\ldots,y_m$.
Now we describe how to construct $G$ from \HH.

$G$ will contain a clique on $n+m$ vertices, $x_1,\ldots,x_n,z_1,\ldots,z_m$ (to be mapped to $\{a,b,c,x_i\}$ and $\{a,b,c,z_l\}$), and another clique on $m$ vertices, $z_1',\ldots,z_m'$ (to be mapped to $\{p,q,r,z_l\}$).

$G$ also contains a disjoint copy of $J(6,3)$ (which is the same as the edge graph of a cube) for each pair $z_l,z_l'$, such that $z_l$ and $z_l'$ are contained in this copy of $J(6,3)$ at distance three from each other.
These embeddings are unique due to Observation \ref{obs:j}.

Finally, $G$ contains a vertex $XZ_{i,l}$ (to be mapped to either $\{a,b,x_i,z_l\}$, $\{a,c,x_i,z_l\}$, or $\{b,c,x_i,z_l\}$, depending on the color of $v_i$) for each $v_i\in y_l$.
$XZ_{i,l}$ is connected to $x_i$, $z_l$, and each other vertex of the form $XZ_{i,l'}$.
(Thus the vertices $(x_i,XZ_{i,l},XZ_{i,l'},\ldots)$ form a clique whose size is one more than the degree of $v_i$ in \HH.)\\

Similarly to the proof of Theorem~\ref{thm:e}, it can be proved that the only possible embedding of $G$ to a Johnson graph is the one described in the construction (with a possible extra $W$ in each set).
The fact that $XZ_{i,l}$ and $XZ_{j,l}$ are not neighbors guarantees that every hyperedge must indeed have all three colors.\\

This finishes the sketch of the proof of Theorem~\ref{thm:j}.


\section{Open problems}
We have seen that determining $d(P)$ and $e(P)$ exactly is hard, but is it possible to efficiently approximate these parameters?
By placing a copy of $P$ above another copy of $P$ (i.e., all elements of one copy are larger than any element of the other copy), we obtain a poset $P+P$ for which $d(P+P)=2d(P)+1$ and $e(P+P)=2e(P)+2$, if $P$ has a smallest and a largest element.
This shows that we cannot hope for an additive constant approximation.

On the other hand, by Mirsky's theorem (the dual of Dilworth's theorem), one can partition any poset $P$ on $n$ elements to $h+1=h(P)+1$ antichains on $n_0,\ldots n_h$ elements where $\sum_{i=0}^h n_i=n$, and embed these antichains one above the other.
For an antichain $A_i$ on $n_i$ elements $d(A_i)\le 1+\log n_i$, thus $d(P)\le \sum_{i=0}^{h} 1+\log n_i\le h+h\log \frac nh$.
(It was proved by Gr\'osz, Methuku and Tompkins \cite{GMT} that almost the same upper bound also holds even for $\pi(P)$.
They have also noted that the upper bound is almost sharp if $n_i\approx n/h$ for all $i$.)
From below we trivially have both $\log n\le d(P)$ and $h\le d(P)$, thus this gives a $2$-approximation for $\log d(P)$.

It would be interesting to close the gap between these bounds.

\subparagraph*{Acknowledgements.}
I would like to thank Bal\'azs Patk\'os for calling my attention to the problem, and thank him, Bal\'azs Keszegh, M\'at\'e Vizer, 
Abhishek Methuku and Joshua Cooper for discussions.
I would also like to thank the anonymous reviewers for several useful suggestions on improving the presentation of the results.

\end{document}